\newcommand{\val}{\mathop{\rm val}} 
\newcounter{enumroman}
\newenvironment{romanitems}{\begin{list}{\bfseries(\roman{enumroman})\hfill}{\usecounter{enumroman}
\setlength{\labelwidth}{\leftmargin}\addtolength{\labelwidth}{-1\labelsep}
\topsep=0mm plus 2pt\itemsep=0mm plus 1pt\parsep=0mm\itemsep=0mm plus
1pt\itemindent=0mm}}{\end{list}}
\title{           A Generalized Markov-Chain Modelling
\\
                  Approach to $(1,\lambda )$-ES Linear Optimization}
\author{          Martin Hole\v{n}a\,\inst{1}\and Alexandre Chotard\,\inst{2}} 
\institute{       Institute of Computer Science, Academy of Sciences,
\\
                  Pod vod\'arenskou v\v{e}\v{z}\'{\i} 2, Prague, Czech Republic,
                  \email{martin@cs.cas.cz}\and INRIA Saclay-Ile-de-France, LRI,
                  University Paris-Sud, France}
\begin{document}
\bibliographystyle{splncs}

\maketitle

                  


\section{         Basic Setting}

\begin{definition} 
                  Let $d,\lambda\! \in \mathbb{N} ,d \ge 2, \eta: \{
                  (r^{i,j})^{i=1\dots\lambda}_{j\in \mathbb{N}
                  }:r^{i,j}\in\mathbb{R}, i=1\dots\lambda,j\in \mathbb{N}
                  \}\to\mathbb{R}_+ $ be Borel-measurable, and let $n\in
                  \mathbb{R}^d$ fulfils
\begin{gather}
                  [n]_1,[n]_2>0\;\&\;d\ge 3 \Rightarrow [n]_k=0 \text{ for } k\ge 3,
\end{gather}      
                  with $[]_i$ standing for the $i$-th component. Denote
                  $\mathcal{B}(\mathbb{R}_+)$ the Borel $ \sigma$-algebra on
                  $\mathbb{R}_+$, and consider a system of $d$-dimensional
                  i.i.d. random vectors
                  $(M_t^{i,j})_{j,t\in\mathbb{N}}^{i=1\dots\lambda }$, a
                  system of random variables $(\Sigma_t)_{t=0}^\infty$ and
                  systems of $d$-dimensional random vectors $(X_t)_{t\in
                  \mathbb{N}_0}, (Y_t^i)_{t\in\mathbb{N}}^{i=1\dots\lambda }$
                  such that:
\begin{romanitems}
\item             $X_0$ and $\Sigma_0$ are independent of each other, as well as of
                  $(M_t^{i,j})_{t,j\in\mathbb{N}}^{i=1\dots\lambda }$,
\begin{gather} 
\label{gx0}        
                  P (\Sigma_0>0)= P (-n^\top X_0>0)=1,
\end{gather}      
\item             for $t\in\mathbb{N}$, 
\begin{gather} 
\label{gst}        
                  \Sigma_t=\eta
                  ((M_t^{i,j})_{j\in\mathbb{N}}^{i=1\dots\lambda })
                  \Sigma_{t-1}, 
\end{gather}      
\item             for $t\in\mathbb{N},i=1\dots\lambda$,
\begin{gather} 
\label{gyt}        
                  Y_t^i=X_{t-1}+\Sigma_tM_t^{i,j_{t,i} (-n^\top X_{t-1}) },
\\
                  \text{where } j_{t,i} (\delta ) =\min\{j:n^\top
                  M_t^{i,j}<\delta \},\delta>0,     
\end{gather}       
\item             for $t\in\mathbb{N}$,
\begin{gather} 
\label{gxt}        
                  X_t=Y_t^{i_t},\text{with
                  }i_t=\min\{\arg\max_{i=1\dots\lambda}[Y_t^i]_1\}. 
\end{gather}      
\end{romanitems}   
                  Then:
\begin{enumerate}
\item[a)]         $Y_t^i,t\in\mathbb{N},i=1\dots\lambda$ is called
                  \emph{$i$-th individual} or \emph{$i$-th feasible solution} in
                  \emph{generation $t$}, with \emph{movement $M_t^{i,j_{t,i}}$}
                  and \emph{step size $\Sigma_t$}.
\item[b)]         $X_t,t\in\mathbb{N}$ is called the \emph{best solution} among
                  the  \emph{population of size $\lambda $} in \emph{generation $t$}. 
\end{enumerate}   
\end{definition}

\begin{proposition} 
                  For $t\in\mathbb{N}$, define 
\begin{gather} 
\label{gdt}        
                  D_t=\frac{-n^\top X_{t-1}}{\Sigma_t}.
\end{gather}      
                  Then no matter what distribution the movements
                  $M_t^{i,j},i=1\dots\lambda,j,t\in\mathbb{N}$ have, the
                  following holds: 
\begin{enumerate}
\item             $(D_t,\Sigma_t)_{t\in\mathbb{N}}$ is a homogeneous Markov chain;
\item             if $\Sigma_t=\sigma\in\mathbb{R}_+ a.s.$ for $t\in\mathbb{N}$
                  (e.g., if $\Sigma_0$ has a degenerate distribution with $
                  \Sigma_0=\sigma$ a.s., and $\eta$ is identity), even
                  $(D_t)_{t\in\mathbb{N}}$ is a homogeneous Markov chain,
\end{enumerate}   
\end{proposition}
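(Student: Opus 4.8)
The plan is to prove both claims through the standard sufficient condition for a homogeneous Markov chain: if a process $(Z_t)$ admits a representation $Z_{t+1}=F(Z_t,U_{t+1})$ with a fixed measurable $F$ and innovations $U_{t+1}$ that are i.i.d.\ and independent of $(Z_0,\dots,Z_t)$, then $(Z_t)$ is a homogeneous Markov chain. The whole task then reduces to exhibiting such a representation for $Z_t=(D_t,\Sigma_t)$, with a single generation's movement family playing the role of the innovation; the i.i.d.\ assumption on $(M_t^{i,j})$ together with the postulated independence of $X_0,\Sigma_0$ from the movements supplies the required independence from the past.

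First I would track the scalar distance $-n^\top X_t$. From \eqref{gxt} and \eqref{gyt},
\[
-n^\top X_t = -n^\top X_{t-1} - \Sigma_t\, n^\top M_t^{i_t,\,j_{t,i_t}(-n^\top X_{t-1})}.
\]
The crucial reduction is that the $d$-dimensional parent $X_{t-1}$ enters the right-hand side only through the scalar $-n^\top X_{t-1}=D_t\Sigma_t$. Indeed, in \eqref{gxt} the common summand $[X_{t-1}]_1$ and the common positive factor $\Sigma_t$ do not affect $\arg\max_{i}[Y_t^i]_1$, so $i_t=\min\{\arg\max_i [M_t^{i,\,j_{t,i}}]_1\}$; and each resampling index $j_{t,i}(\delta)=\min\{j:n^\top M_t^{i,j}<\delta\}$ depends on the history only through the threshold $\delta=-n^\top X_{t-1}$. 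Hence the selected projection $n^\top M_t^{i_t,\,j_{t,i_t}}$ is a fixed measurable function $w$ of the movement family and of the single real $D_t\Sigma_t$.

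Dividing the display by $\Sigma_{t+1}=\eta(\cdots)\Sigma_t$ from \eqref{gst} and using $D_{t+1}=-n^\top X_t/\Sigma_{t+1}$ as in \eqref{gdt}, the overall prefactor $\Sigma_t$ cancels and I obtain explicit recursions of the form $\Sigma_{t+1}=\eta\cdot\Sigma_t$ and $D_{t+1}=F_D(D_t,\Sigma_t;\text{movements})$, in which the parent position has disappeared and only $(D_t,\Sigma_t)$ and one generation's movement family remain (the surviving dependence on $\Sigma_t$ sits inside the threshold $D_t\Sigma_t$ fed to $w$). Invoking the criterion above yields claim 1. For claim 2, if $\Sigma_t=\sigma$ a.s.\ then the multiplier $\eta$ is a.s.\ $1$ and $\Sigma_t$ enters $F_D$ only as the constant $\sigma$, so the recursion collapses to $D_{t+1}=F(D_t;\text{movements})$ and $(D_t)$ alone satisfies the criterion.

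The step I expect to be the real obstacle is the independence bookkeeping in the criterion, namely confirming that the innovation governing the step from $(D_t,\Sigma_t)$ to $(D_{t+1},\Sigma_{t+1})$ is genuinely independent of the past chain. Because the same generation's movements drive both the multiplier $\eta$ in \eqref{gst} and the selected movement $w$ in \eqref{gyt}--\eqref{gxt}, one must check carefully that these two functionals inject into the transition no information about the history beyond the scalar $D_t\Sigma_t$. Fixing the generation-indexing so that the transition is a function of $(D_t,\Sigma_t)$ and a fresh, past-independent movement family, and then verifying measurability of $F_D$, $w$, and $\eta$, is where I would concentrate the argument.
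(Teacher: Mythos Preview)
Your approach is correct and in substance the same as the paper's: both show that the transition from $(D_t,\Sigma_t)$ to $(D_{t+1},\Sigma_{t+1})$ is driven by the current state together with a fresh i.i.d.\ generation of movements, resting on the two observations that the $\arg\max$ in \eqref{gxt} is invariant under the common shift $[X_{t-1}]_1$ and positive scale $\Sigma_t$, and that the resampling index $j_{t,i}$ depends on the past only through the scalar threshold $-n^\top X_{t-1}$. The paper carries this out by writing the conditional distribution of $(D_{t+1},\Sigma_{t+1})$ given the full history explicitly and checking it depends only on $(D_t,\Sigma_t)=(\delta,\sigma)$, whereas you package the same content via the iterated-random-function criterion $Z_{t+1}=F(Z_t,U_{t+1})$; your flagging of the generation-indexing and independence bookkeeping as the delicate step corresponds precisely to the paper's preliminary enumeration of which derived arrays $(j_{t,i}(\delta))$, $([M_t^{i,j_{t,i}(\delta)}]_1)$, $(M_t^{i_t,j_{t,i_t}(\delta)})$, $(\eta((M_{t+1}^{i,j})))$ are i.i.d.\ across $t$.
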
    
\begin{proof}
                  Due to (\ref{gst})--(\ref{gdt}),
                  $(D_t,\Sigma_t)_{t\in\mathbb{N}}$ can be considered as an
                  $\mathbb{R}_+\times\mathbb{R}_+$-valued random process.

                  The fact that
                  $(M_t^{i,j})_{j,t\in\mathbb{N}}^{i=1\dots\lambda }$ are
                  i.i.d. implies that also $(n^\top M_t^{i,j}
                  )_{t,j\in\mathbb{N}}^{i=1\dots\lambda }$ and $(\eta
                  ((M_{t+1}^{i,j})_{j\in\mathbb{N}}^{i=1\dots\lambda
                  }))_{t\in\mathbb{N}}$ are i.i.d., and for each $\delta>0$:
\begin{itemize}
\item             $(j_{t,i} (\delta ))_{t\in\mathbb{N}}^{i=1\dots\lambda }$
                  are i.i.d.,
\item             $([M_t^{i,j_{t,i} (\delta )}]_i)_{t\in\mathbb{N}}^{i=1\dots\lambda }$ 
                  are i.i.d.,
\item             $(\min\{\arg\max_{i=1\dots\lambda}[M_t^{i,j_{t,i} (\delta)}]_1
                  \})_{t\in\mathbb{N}}$ are i.i.d.,
\item             $(M_t^{i_t,j_{t,i_t} (\delta )})_{t\in\mathbb{N}}$ are i.i.d.
\end{itemize}     
                  Let now $t\in\mathbb{N},t\ge
                  2,\delta_0,\sigma_0\dots\delta_{t-1},\sigma_{t-1},\delta,\sigma
                  \in\mathbb{R}_+ $ and $A,B \in \mathcal{B}(\mathbb{R}_+) $. From
                  (\ref{gx0})--(\ref{gxt}) follows:
\begin{multline} 
\label{gmc}        
                  P ((D_{t+1},\Sigma_{t+1})\in A\times
                  B|(D_0,\Sigma_0)=
\\
                  =(\delta_0,\sigma_0)\dots(D_{t-1},\Sigma_{t-1})=
                  (\delta_{t-1},\sigma_{t-1}),(D_t,\Sigma_t)=(\delta,\sigma) )=
\\
                  =P (M_{t+1}^{i_{t+1},j_{t+1,i_{t+1}} (\delta )}\!\!\in\!\{\xi\!\in\!
                  \mathbb{R}:\xi+\delta \!\in\! A\}\&\;\eta
                  (M_{t+1}^{i,j})_{j\!\in\!\mathbb{N}}^{i=1\dots\lambda
                  })\!\in\!\{\varsigma\!\in\!\mathbb{R}_+ :\varsigma\sigma\!\in\! B\} )
\\
                  =P ((D_{t+1},\Sigma_{t+1})\in A\times
                  B|(D_t,\Sigma_t)=(\delta,\sigma)).                   
\end{multline}  
                  Therefore, $(D_t,\Sigma_t)_{t\in\mathbb{N}}$ is a Markov
                  chain. Its homogeneity is a consequence of $(n^\top M_t^{i,j}
                  n)_{t,j\in\mathbb{N}}^{i=1\dots\lambda }$ and $(\eta
                  (M_{t+1}^{i,j})_{j\in\mathbb{N}}^{i=1\dots\lambda
                  }))_{t\in\mathbb{N}}$ being i.i.d. Moreover, if $(\forall
                  t\in\mathbb{N}) \Sigma_t=\sigma\in\mathbb{R}_+ a.s.$, then for
                  $A \in \mathcal{B}(\mathbb{R}_+)$ such that $\sigma\in A$,
\begin{gather}
                  P ((D_{t+1},\Sigma_{t+1})\in A\times\{\sigma\})=P ((D_{t+1})\in A),
\end{gather}      
                  which together with (\ref{gmc}) implies that also
                  $(D_t)_{t\in\mathbb{N}}$ is a homogeneous Markov chain. \qed  
\end{proof}

\section{          Markov Chain Properties if Step Size Is Constant}

                  Following the first part of \cite{chotard14markov}, we
                  restrict attention to the constant step size in the remainder
                  of the paper. Proposition~\ref{g2} below generalizes
                  Lemmas~1 and 3 and Propositions~3--4 from \cite{chotard14markov}.

\begin{proposition} 
\label{g2}        
                  Denote $\mu_+$ the Lebesgue measure on
                  $(\mathbb{R}_+,\mathcal{B}(\mathbb{R}_+))$, $H$ the
                  distribution function corresponding to the first two
                  dimensions of the random vectors
                  $M_t^{i,j},j,t\in\mathbb{N},i=1\dots\lambda$, and $H_\delta
                  $ and $H^\star_\delta $ for $\delta>0$ the distribution
                  functions corresponding to the first two dimensions of the
                  random vectors $M_t^{i,j_{t,i} (\delta
                  )},t\in\mathbb{N},i=1\dots\lambda$ and $M_t^{i_t,j_{t,i_t}
                  (\delta )},t\in\mathbb{N}$, respectively. Let $\lambda \ge 2,
                  (\forall t\in\mathbb{N}) \Sigma_t=\sigma\in\mathbb{R}_+ a.s.$,
                  and the distribution $H$ be absolutely continuous, with
                  continuous strictly positive density $h$. Finally, define the
                  open linear half-space $L_a $ for $a\in\mathbb{R}$ and the
                  functions $\mathbb{I}_A $ and $V_\alpha$ by
\begin{align} 
\label{ghl}        
                  L_a&=\{x\in\mathbb{R}^2:[x]_1[n]_1+[x]_2[n]_2<a  \}
                  \text{ for } a\in\mathbb{R},
\\ 
\label{g1a}        
                  (\forall x\in\mathbb{R}^2)\; \mathbb{I}_A (x) &=
\begin{cases}
                  1&\text{ if } x\in A,
\\
                  0&\text{ if } x\not\in A,
\end{cases}       
                  \text{ for } A \subset \mathbb{R}^2.
\\ 
\label{gva}        
                  (\forall \delta ) V_\alpha (\delta )&=\exp (\alpha
                  \delta)\text{ for } \alpha >0.
\end{align} 
                  Then the following holds:
\begin{enumerate}
\item             The distribution functions $H_\delta $ and
                  $H^\star_\delta $ are for each $\delta>0$
                  absolutely continuous, and their densities are, respectively,
\begin{align} 
\label{gdd}        
                  (\forall x\in\mathbb{R}^2)\; h_\delta (x) &=\frac{ h (x)
                  \mathbb{I}_{L_\delta}(x)}{ H (L_\delta) } \text{ and}   
\\ 
\label{gds}        
                  (\forall x\in\mathbb{R}^2)\;h^\star_\delta(x)  &=\lambda h_{
                  \delta }(x)H_{ \delta } ((-\infty,[x]_1
                  )\times\mathbb{R})^{\lambda-1}. 
\end{align} 
\item             The transition probability kernel of $(D_t)_{t\in\mathbb{N}}$, defined 
\begin{gather} 
\label{gpk}        
                  (\forall \delta \in\mathbb{R}_+) (\forall A \in
                  \mathcal{B}(\mathbb{R}_+)\;P (\delta,A)=P (D_{t+1}\in
                  A|D_t=\delta )
\end{gather}      
                  is with respect to its first argument $\delta $ continuous
                  on $\mathbb{R}_+$.
\item             $(D_t)_{t\in\mathbb{N}}$ is $\mu_+$-irreducible and aperiodic.
\item             If the following two additional conditions are fulfilled:
\begin{multline} 
\label{gmh}        
                  (\exists \varepsilon>0) (\forall \alpha \in (0,\varepsilon ) )
                  (\forall k\in\mathbb{N}) \;\int_{\mathbb{R}^2} |\alpha n^\top
                  x|^kh (x)dx<+\infty \;\&
\\
                  \&\;\int_{\mathbb{R}^2} \exp(|\alpha n^\top x|)h
                  (x)dx=\sum^{+\infty}_{k\in\mathbb{N}}\int_{\mathbb{R}^2} |\alpha
                  n^\top x|^kh (x)dx<+\infty, 
\end{multline}      
\begin{gather} 
\label{gal}        
                  \text{and } \int_{\mathbb{R}^2}n^\top xH((-\infty,[x]_1)\times
                  \mathbb{R} )^{ \lambda-1}h (x) dx>0, 
\end{gather}      
                  then $(\exists \alpha_n>0)$ such that for $\alpha \in (0,\alpha_n)$
                  is $(D_t)_{t\in\mathbb{N}}$ $V_\alpha-$geometrically ergodic and
                  positive Harris recurrent.
\end{enumerate}   
\end{proposition}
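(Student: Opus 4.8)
The plan is to treat the four claims in turn, reducing everything to the scalar recursion for $(D_t)$ and then invoking the geometric drift machinery of Meyn and Tweedie in the last and hardest step.

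For claim~1 I would first observe that, by the definition of $j_{t,i}$ in (\ref{gyt}), the movement $M_t^{i,j_{t,i}(\delta)}$ is simply a draw of $M_t^{i,\cdot}$ conditioned on the feasibility event $\{n^\top M<\delta\}$, i.e.\ on membership in the half-space $L_\delta$ of (\ref{ghl}); hence (\ref{gdd}) is the elementary conditional density $h(x)\mathbb{I}_{L_\delta}(x)/H(L_\delta)$, well defined because the strictly positive continuous $h$ makes $H(L_\delta)>0$ for every $\delta$. For (\ref{gds}) I would note that $[Y_t^i]_1=[X_{t-1}]_1+\sigma[M_t^{i,\cdot}]_1$ with $\sigma>0$, so the index $i_t$ of (\ref{gxt}) selects the one of the $\lambda$ i.i.d.\ draws from $H_\delta$ whose first coordinate is largest; the density of that maximiser is the standard order-statistic density $\lambda h_\delta(x)H_\delta((-\infty,[x]_1)\times\mathbb{R})^{\lambda-1}$, the factor $\lambda$ counting the maximising index and the $(\lambda-1)$-th power being the chance that the remaining draws have smaller first coordinate. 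Claim~2 then follows from the explicit one-step law: the feasibility requirement built into $j_{t,i}$ forces the selected movement $M^\star$ into $L_\delta$, so $M^\star\sim H^\star_\delta$ and $D_{t+1}=\delta-n^\top M^\star$, whence for bounded continuous $f$ one has $\int f\,dP(\delta,\cdot)=\int_{\mathbb{R}^2}f(\delta-n^\top x)h^\star_\delta(x)\,dx$; since $h$ is continuous and $H(L_\delta)$ is continuous and strictly positive in $\delta$, $h^\star_\delta(x)\to h^\star_{\delta_0}(x)$ for almost every $x$ as $\delta\to\delta_0$ and is dominated near $\delta_0$ by a fixed multiple of $h$, so dominated convergence gives continuity of $\delta\mapsto\int f\,dP(\delta,\cdot)$, i.e.\ Feller continuity.

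For claim~3 I would use that $M^\star\in L_\delta$ gives $n^\top M^\star<\delta$ and hence $D_{t+1}=\delta-n^\top M^\star>0$, keeping the chain on $\mathbb{R}_+$, while $n^\top M^\star$ ranges over all of $(-\infty,\delta)$ because $h>0$ everywhere. Consequently the one-step law of $D_{t+1}$ has a density on $\mathbb{R}_+$ equal to the push-forward of the strictly positive $h^\star_\delta$ under $x\mapsto\delta-n^\top x$, whose value at any target $\delta'>0$ is an integral of $h^\star_\delta$ over the line $\{n^\top x=\delta-\delta'\}$ and is therefore positive. This single-step positivity yields at once strong $\mu_+$-irreducibility (every set of positive Lebesgue measure is reached in one step from every state) and aperiodicity, and, together with the Feller continuity of claim~2, makes $(D_t)$ a $T$-chain for which all compact subsets of $\mathbb{R}_+$ are petite.

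Claim~4 is where the real work lies, and I would base it on a geometric (Foster--Lyapunov) drift condition for $V_\alpha(\delta)=\exp(\alpha\delta)$ from (\ref{gva}). The recursion gives $PV_\alpha(\delta)=\exp(\alpha\delta)\,\mathbb{E}_{H^\star_\delta}[\exp(-\alpha n^\top M)]$, so the drift ratio is $G(\delta,\alpha):=PV_\alpha(\delta)/V_\alpha(\delta)=\mathbb{E}_{H^\star_\delta}[\exp(-\alpha n^\top M)]$. As $\delta\to\infty$ the region $L_\delta$ fills $\mathbb{R}^2$, so $H(L_\delta)\to1$ and $h^\star_\delta(x)\to h^\star_\infty(x):=\lambda h(x)H((-\infty,[x]_1)\times\mathbb{R})^{\lambda-1}$ pointwise a.e.; using the exponential-moment bound of (\ref{gmh}) to dominate $\exp(-\alpha n^\top x)h^\star_\delta(x)$ by $(\lambda/H(L_{\delta_0}))\exp(|\alpha n^\top x|)h(x)$ for $\delta\ge\delta_0$, dominated convergence yields $\lim_{\delta\to\infty}G(\delta,\alpha)=g(\alpha):=\int_{\mathbb{R}^2}\exp(-\alpha n^\top x)h^\star_\infty(x)\,dx$. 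The moment bounds of (\ref{gmh}) further license differentiation under the integral, giving $g(0)=1$ and $g'(0)=-\lambda\int_{\mathbb{R}^2}n^\top x\,H((-\infty,[x]_1)\times\mathbb{R})^{\lambda-1}h(x)\,dx$, which by (\ref{gal}) is strictly negative; hence there is $\alpha_n>0$ with $g(\alpha)<1$ for all $\alpha\in(0,\alpha_n)$. Fixing such an $\alpha$, setting $\kappa:=(1+g(\alpha))/2<1$ and choosing $\delta^\star$ with $G(\delta,\alpha)\le\kappa$ for $\delta\ge\delta^\star$, while the same dominating bound shows $G(\cdot,\alpha)$ finite and bounded on the complementary compact part, I obtain $PV_\alpha(\delta)\le\kappa V_\alpha(\delta)+b\,\mathbb{I}_C(\delta)$ with $C$ petite and $b<\infty$; combined with the $\mu_+$-irreducibility and aperiodicity of claim~3, this is exactly the hypothesis of the Meyn--Tweedie geometric-ergodicity theorem, delivering $V_\alpha$-geometric ergodicity and positive Harris recurrence. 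The main obstacle is the limit $\delta\to\infty$ in the drift ratio --- justifying the interchange of limit and integral and the differentiation under the integral, and controlling the normalisation $H(L_\delta)$ and the order-statistic factor uniformly, which is precisely what (\ref{gmh}) and (\ref{gal}) are designed to supply; a secondary technical point is verifying petiteness of the centre set as $\delta\to0^+$, handled by the $T$-chain property.
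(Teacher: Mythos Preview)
Your proposal is correct and follows the paper's approach throughout: conditional and order-statistic densities for part~1, dominated convergence for continuity in part~2, one-step positivity of the transition density for irreducibility and aperiodicity in part~3, and a Foster--Lyapunov drift with $V_\alpha$ together with Meyn--Tweedie's geometric ergodic theorem for part~4 (the paper expands $\exp(-\alpha n^\top x)$ as a power series and bounds the tail via (\ref{gmh}) rather than differentiating your $g$ at $0$, but the content is identical).

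One point to watch: what you actually establish in part~2 is the weak Feller property, i.e.\ continuity of $\delta\mapsto\int f\,dP(\delta,\cdot)$ for bounded continuous $f$, whereas the stated claim~2 is continuity of $\delta\mapsto P(\delta,A)$ for every fixed Borel $A$, and indicators of Borel sets are not continuous. The paper treats these as two separate steps: it proves the literal claim~2 by a direct estimate on $|H^\star_{\delta'}(B)-H^\star_\delta(B)|$ via the layer $L_{\max\{\delta,\delta'\}}\setminus L_{\min\{\delta,\delta'\}}$, and establishes weak Feller only at the beginning of part~4, where (as you correctly note) it is used to make compacts petite. Your dominated-convergence argument can be upgraded to give the Borel-set version by tracking both the $\delta$-dependence of $h^\star_\delta$ and of the affine shift $c_\delta(x)=\delta-n^\top x$ separately, which is exactly what the paper's layer estimate does.
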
 
\begin{proof} 
\begin{enumerate}
\item             Due to (\ref{ghl}) and the strict positivity of $h$,
\begin{gather} 
\label{gl0}        
                  H (L_\delta)>H (L_0)>0
\end{gather}      
                  no matter which $\delta>0 $ is considered.  Let now
                  $\delta>0,A\in\mathcal{B}(\mathbb{R}^2)$. Then for
                  $t\in\mathbb{N},i=1\dots\lambda$,
\begin{multline}
                  H_\delta (A)= P (M_t^{i,j_{t,i} (\delta )}\in A)=\sum_{k\in\mathbb{N}}P
                  (M_t^{i,j_{t,i} (\delta )}\in A|j_{t,i}=k)P (j_{t,i}=k) =
\\
                  =\sum_{k\in\mathbb{N}}P (M_t^{i,k}\in A\cap L_\delta\;\&\;
                  (\forall \ell<k)\;M_t^{i,\ell}\not\in L_\delta|M_t^{i,k}\in
                  L_\delta\;\&
\\
                  \&\; (\forall \ell<k)\;M_t^{i,\ell}\not\in L_\delta)P
                  (j_{t,i}=k) =
\\
                  =\sum_{k\in\mathbb{N}} \frac{P (M_t^{i,k}\in A\cap
                  L_\delta)\prod_{\ell<k}P (M_t^{i,\ell}\not\in L_\delta) }{P
                  (M_t^{i,k}\in L_\delta)\prod_{\ell<k}P (M_t^{i,\ell}\not\in
                  L_\delta)}P (j_{t,i}=k)=
\\
                  =\sum_{k\in\mathbb{N}} \frac{P (M_t^{i,k}\in A\cap
                  L_\delta)}{P (M_t^{i,k}\in L_\delta)}P (j_{t,i}=k)=\frac{H
                  (A\cap L_\delta) }{H (L_\delta) }=\frac{\int_Ah
                  \mathbb{I}_{L_\delta} }{H (L_\delta)}. 
\end{multline}    
                  Consequently, $H_\delta$ is absolutely continuous, with
                  density $\frac{h\mathbb{I}_{L_\delta}}{H (L_\delta) }$. Due to
                  the absolute continuity of $H_\delta$, the ties between
                  $[M_t^{i,j_{t,i}(\delta )}]_1$ and
                  $[M_t^{k,j_{t,k}(\delta)}]_1$ have probability 0 for any
                  $t\in\mathbb{N},i,k=1\dots\lambda,i\ne k$, thus we can
                  ignore such ties in the subsequent calculations. Let now
                  $x\in\mathbb{R}^2$ be an inner point of $L_\delta$, thus
                  $(\exists \epsilon>0) (\forall
                  x'\in\mathbb{R}^2)\;max(|[x'-x]_1|,|[x'-x]_2|)<\epsilon
                  \Rightarrow x'\in L_\delta$. Then for
                  $t\in\mathbb{N},x\in\mathbb{R}^2,\epsilon_1,\epsilon_2\in
                  (0,\epsilon)$,\small  
\begin{multline} 
                  H^\star_\delta ([x]_1+\epsilon_1,[x]_2+\epsilon_2)-H^\star_\delta
                  (x)\!=\! P (M_t^{i_t,j_{t,i_t} (\delta )}\!\!\in\!
                  ([x]_1,[x]_1+\epsilon_1)\!\times\!
                  ([x]_2,[x]_2+\epsilon_2))
\\
                  =\sum_{k=1}^\lambda P
                  (M_t^{i_t,j_{t,i_t} (\delta )}\in
                  ([x]_1,[x]_1+\epsilon_1)\!\times\!
                  ([x]_2,[x]_2+\epsilon_2)|i_t=k)P (i_t=k)=
\\
                  =\sum_{k=1}^\lambda P (M_t^{k,j_{t,k} (\delta )}\in
                  ([x]_1-\epsilon_1,[x]_1+\epsilon_1)\!\times\!
                  ([x]_2-\epsilon_2,[x]_2+\epsilon_2)\;\&
\\
                  \&\; (\forall
                  i\in\{1\dots\lambda \} \setminus
                  \{k\})[M_t^{i,j_{t,i}(\delta )}]_1 <
                  [M_t^{k,j_{t,k}(\delta)}]_1)P (i_t=k)=   
\\
                  =\sum_{k=1}^\lambda ( P (M_t^{k,j_{t,k} (\delta )}\in
                  ([x]_1,[x]_1+\epsilon_1)\!\times\! ([x]_2,[x]_2+\epsilon_2)\;\&
\\
                  (\forall i\!\in\!\{1\dots\lambda \} \setminus
                  \{k\})[M_t^{i,j_{t,i}(\delta )}]_1\!\! \le\![x]_1\!)+P (M_t^{k,j_{t,k}
                  (\delta )}\!\!\!\in\! ([x]_1,[x]_1\!+\epsilon_1\!)\!\times\!
                  ([x]_2,[x]_2+\epsilon_2)
\\
                  \&\;(\exists\ell\in\{1\dots\lambda \} \setminus
                  \{k\})[x]_1<[M_t^{\ell,j_{t,\ell}(\delta )}]_1 <
                  [M_t^{k,j_{t,k}(\delta)}]_1)\;\&
\\
                  \&\;(\forall i\in\{1\dots\lambda
                  \} \setminus \{k,\ell\})[M_t^{i,j_{t,i}(\delta )}]_1 <
                  [M_t^{\ell,j_{t,i}(\delta )}]_1)P (i_t=k)=
\\
                  =\sum_{k=1}^\lambda P (M_t^{k,j_{t,k} (\delta )}\!\!\in\!
                  ([x]_1,[x]_1+\epsilon_1)\!\times\!
                  ([x]_2,[x]_2+\epsilon_2))\!\!\prod_{i=1,i\ne k}^\lambda P
                  ([M_t^{i,j_{t,i}(\delta )}]_1
                  \le[x]_1)+
\\
                  +\sum_{k=1}^{ \lambda-1}\sum_{\ell=1,\ell\ne k}^\lambda
                  P(M_t^{k,j_{t,k} (\delta )}\in ([x]_1,[x]_1+\epsilon_1)\!\times\!
                  ([x]_2,[x]_2+\epsilon_2)\;\&
\\
                  [M_t^{\ell,j_{t,\ell}(\delta )}]_1\!\!\in\!
                  ([x]_1,[M_t^{k,j_{t,k}(\delta)}]_1)\&(\forall
                  i\!\in\!\{1\dots\lambda \} \setminus
                  \{k,\ell\})[M_t^{i,j_{t,i}(\delta )}]_1\! <\!
                  [M_t^{\ell,j_{t,\ell}(\delta)}]_1) 
\\
                  =\lambda (H_\delta
                  ([x]_1+\epsilon_1,[x]_2+\epsilon_2)-H_\delta (x))H_\delta
                  ((-\infty,[x]_1)\!\times\! \mathbb{R} )^{ \lambda-1}+
\end{multline}  
\begin{multline*}
                  +\sum_{k=1}^{ \lambda-1}\sum_{\ell=1,\ell\ne k}^\lambda
                  P(M_t^{k,j_{t,k} (\delta )}\in ([x]_1,[x]_1+\epsilon_1)\!\times\!
                  ([x]_2,[x]_2+\epsilon_2)\;\&
\\
                  [M_t^{\ell,j_{t,\ell}(\delta)}]_1
                  \!\!\in\! ([x]_1,[M_t^{k,j_{t,k}(\delta)}]_1)\&(\forall
                  i\!\in\!\{1\dots\lambda \} \setminus
                  \{k,\ell\})[M_t^{i,j_{t,i}(\delta )}]_1\! <\!
                  [M_t^{\ell,j_{t,\ell}(\delta)}]_1)
\\
                  \le\lambda (H_\delta
                  ([x]_1+\epsilon_1,[x]_2+\epsilon_2)-H_\delta
                  (x))H_\delta((-\infty,[x]_1)\!\times\! \mathbb{R} )^{
                  \lambda-1}+
\\
                  \sum_{k=1}^{ \lambda-1}\sum_{\ell=1,\ell\ne
                  k}^\lambda\!\! P(M_t^{k,j_{t,k} (\delta )}\!\!\in\!
                  ([x]_1,[x]_1+\epsilon_1)\!\times\!
                  ([x]_2,[x]_2+\epsilon_2)\&[M_t^{\ell,j_{t,\ell}(\delta)}]_1
                  \!\!\in\! ([x]_1,[x]_1+\epsilon_1)
\\
                  =\lambda (H_\delta
                  ([x]_1+\epsilon_1,[x]_2+\epsilon_2)-H_\delta (x))H_\delta
                  ((-\infty,[x]_1)\!\times\! \mathbb{R} )^{ \lambda-1}+
\\
                  +\lambda (\lambda-1)(H_\delta
                  ([x]_1+\epsilon_1,[x]_2+\epsilon_2)-H_\delta (x))(H_\delta
                  ([x]_1+\epsilon_1,\infty)-H_\delta (-\infty,[x]_1)).
\end{multline*}
                  \normalsize This entails the inequality 
\begin{multline} 
\label{gsi}        
                  \frac{H_\delta ([x]_1+\epsilon_1,[x]_2+\epsilon_2)-H_\delta
                  (x)}{\epsilon_1\epsilon_2}\lambda
                  H_\delta((-\infty,[x]_1)\times \mathbb{R} )^{ \lambda-1}\le
\\
                  \le \frac{H^\star_\delta
                  ([x]_1+\epsilon_1,[x]_2+\epsilon_2)-H^\star_\delta
                  (x)}{\epsilon_1\epsilon_2}\le
\\
                  \le\frac{H_\delta
                  ([x]_1+\epsilon_1,[x]_2+\epsilon_2)-H_\delta
                  (x)}{\epsilon_1\epsilon_2} (\lambda
                  H_\delta((-\infty,[x]_1)\times \mathbb{R} )^{
                  \lambda-1}+
\\
                  +\lambda (\lambda-1)(H_\delta
                  ([x]_1+\epsilon_1,\infty)-H_\delta (-\infty,[x]_1)).
\end{multline}      
                  Because the leftmost part and the rightmost part of the
                  inequality (\ref{gsi}) have the same limit for
                  $(\epsilon_1,\epsilon_2)\stackrel{\epsilon_1,\epsilon_2>0}{
                  \longrightarrow }  (0,0)$, this is also the limit of the
                  middle part, 
\begin{multline} 
\label{gsd}        
                  \lim_{(\epsilon_1,\epsilon_2)\to (0,0)}\frac{H^\star_\delta
                  ([x]_1+\epsilon_1,[x]_2+\epsilon_2)-H^\star_\delta
                  (x)}{\epsilon_1\epsilon_2}=
\\
                  =\lim_{(\epsilon_1,\epsilon_2)\to
                  (0,0)}\frac{H_\delta
                  ([x]_1+\epsilon_1,[x]_2+\epsilon_2)-H_\delta
                  (x)}{\epsilon_1\epsilon_2}\lambda H_\delta
                  ((-\infty,[x]_1)\times \mathbb{R} )^{ \lambda-1}=
\\
                  =\lambda h_{ \delta } (x) H_{ \delta } ((-\infty,[x]_1)\times
                  \mathbb{R} )^{\lambda-1},
\end{multline}      
                  which means that $H^\star_\delta $ is absolutely continuous
                  with density given by (\ref{gds}).
\item             For $\delta,\delta'>0$, denote $L_{ \delta,\delta'}$ a layer
                  between the hyperplanes delimiting the half-spaces $L_\delta$
                  and $L_{\delta'}$. More precisely 
\begin{gather}
                  L_{ \delta,\delta'}=L_{\max\{\delta,\delta'\}} \setminus
                  L_{\min\{\delta,\delta'\}}. 
\end{gather}      
                  Let now $\delta>0$. The absolute continuity of $H$ together with
                  (\ref{ghl}) imply  
\begin{gather} 
\label{gll}        
                  \lim_{ \delta'\to \delta }H (L_{\delta'})=H (L_\delta)>0,
\end{gather}      
                  Consequently, $H (L_{\delta'})> \frac{1}{2} H (L_\delta)$ for
                  $\delta'$ close enough to $ \delta $, which for those
                  $\delta'$ entails  
\begin{gather} 
\label{ghn}        
                  \frac{ H_{ \delta' }((-\infty,[x]_1)\times \mathbb{R}
                  )^{\lambda-1}}{ H (L_{\delta'}) }\le\frac{1}{ H (L_{\delta'}) }\le
                  \frac{ 2}{ H (L_\delta) }.
\end{gather}      
                  Due to the continuity of $h$ and absolute continuity of $H$,
\begin{gather} 
\label{glh}        
                  \lim_{ \delta'\to \delta }\frac{ h}{ H (L_{\delta'}) }=\frac{
                  h}{ H (L_\delta) }, 
\end{gather}      
                  together with (\ref{ghn}) allowing to apply the dominated
                  convergence theorem, which yields
\begin{gather} 
\label{gld}        
                  (\forall A\in\mathcal{B}(\mathbb{R}^2))\;\lim_{ \delta'\to \delta
                  }\int_A\frac{ h}{ H (L_{\delta'}) }=\int_A\frac{ h}{ H (L_\delta) }.
\end{gather}      
                  In addition, (\ref{gdd}) and (\ref{ghn}) imply
\begin{multline} 
\label{gid}        
                  (\forall A\in\mathcal{B}(\mathbb{R}^2))\;|H_{\delta'}
                  (A)-H_\delta (A)|=|\int_A\frac{ h\mathbb{I}_{L_{\delta'}}}{ H
                  (L_{\delta'}) }-\int_A\frac{ h\mathbb{I}_{L_\delta}}{ H
                  (L_\delta) }|\le 
\\
                  \le|\int_A\frac{ h\mathbb{I}_{L_{\delta'}}}{ H
                  (L_{\delta'}) }-\int_A\frac{ h\mathbb{I}_{L_\delta}}{ H
                  (L_{\delta'}) }|+|\int_A\frac{ h\mathbb{I}_{L_\delta}}{ H
                  (L_{\delta'}) }-\int_A\frac{ h\mathbb{I}_{L_\delta}}{ H
                  (L_\delta) }|=
\\
                  =\frac{|\int_{A\cap L_{\delta'}} h-\int_{A\cap L_\delta}
                  h|}{H(L_{\delta'})}+|\int_{A\cap L_\delta}\frac{ h}{ H
                  (L_{\delta'}) }-\int_{A\cap L_\delta}\frac{ h}{ H (L_\delta)
                  }|\le
\\
                  \le \frac{2\int_{A\cap L_{ \delta,\delta'}
                  }h}{H(L_\delta)}+|\int_{A\cap L_\delta}\frac{ h}{ H (L_{\delta'})
                  }-\int_{A\cap L_\delta}\frac{ h}{ H (L_\delta) }|.
\end{multline}      
                  Together with (\ref{gld}) and with the absolute continuity of
                  $H$, this inequality leads for $A\in\mathcal{B}(\mathbb{R}^2)$ to 
\begin{gather} 
\label{gdl}        
                  \lim_{ \delta'\to \delta }H_{\delta'} (A)=H_\delta (A)+\lim_{
                  \delta'\to \delta }(H_{\delta'} (A)-H_\delta (A)) =H_\delta (A),
\end{gather}      
                  which combined with (\ref{ghn}) and (\ref{glh}) allows to
                  again apply the dominated convergence theorem, this time
                  yielding 
\begin{gather} 
\label{gls}        
                  \lim_{ \delta'\to \delta }\int_A\frac{ hH_{ \delta' }
                  ((-\infty,[x]_1)\times \mathbb{R} )^{\lambda-1}}{ H (L_{\delta'})
                  }=\int_A\frac{ hH_{ \delta }((-\infty,[x]_1)\times \mathbb{R}
                  )^{\lambda-1}}{ H (L_\delta) }. 
\end{gather}      
                  In a way analogous to (\ref{gid}), from (\ref{gds}) and
                  (\ref{ghn}) follows:
\begin{multline} 
\label{gis}        
                  (\forall A\in\mathcal{B}(\mathbb{R}^2))\;|H^\star_{\delta'}
                  (A)-H^\star_\delta (A)|\le\frac{2\int_{A\cap L_{
                  \delta,\delta'} }h}{H(L_\delta)}+
\\
                  +|\int_{A\cap L_\delta}\frac{ hH_{ \delta' }
                  ((-\infty,[x]_1)\times \mathbb{R} )^{\lambda-1}}{ H
                  (L_{\delta'}) }-\int_{A\cap L_\delta}\frac{ hH_{ \delta'
                  }((-\infty,[x]_1)\times \mathbb{R} )^{\lambda-1}}{ H
                  (L_\delta) }|.
\end{multline}    
                  Let $c_\delta:\mathbb{R}^2\to\mathbb{R}_+ $ be a function
                  defined 
\begin{gather} 
\label{gcd}        
                  (\forall x\in\mathbb{R}^2)\;c_\delta (x)=\delta-n^\top x
\end{gather}      
                  Taking into account the Borel-measurability of $c_\delta$, the
                  absolute continuity of $H$ and the fact that, due to
                  (\ref{gdd})--(\ref{gpk}),
\begin{gather} 
\label{gps}        
                  (\forall A\in\mathcal{B}(\mathbb{R}_+))\;P
                  (\delta,A)=H^\star_\delta (c_\delta^{-1} (A) )=\int_{c_\delta^{-1} (A)}
                  h^\star_\delta (x) dx ,  
\end{gather}      
                  (\ref{gls})--(\ref{gis}) already imply, for
                  $A\in\mathcal{B}(\mathbb{R}_+)$ the desired continuity of the
                  transition probability kernel with respect to $\delta $, 
\begin{multline}
                  \lim_{ \delta'\to \delta }P (\delta',A)=P (\delta,A)+\lim_{
                  \delta'\to \delta } (P (\delta',A)-P (\delta,A))=
\\
                  =P (\delta,A)+\lim_{ \delta'\to \delta } (H^\star_{\delta'}
                  (c_\delta^{-1} (A))-H^\star_\delta (c_\delta^{-1} (A))=P
                  (\delta,A).
\end{multline}      
\item             Let $\delta>0,A\in\mathcal{B}(\mathbb{R}_+)$ be such that
                  $\mu_+ (A)>0$. According to (\ref{gcd}), $c_\delta$ is
                  Lipschitz, therefore $c_\delta^{-1} (A)$ cannot be Lebesgue
                  negligible (\cite{fremlin10measure}, Proposition~262D). At the
                  same time, (\ref{ghl}) and (\ref{gcd}) imply $c_\delta^{-1}
                  (A) \subset L_\delta $, thus $h^\star_\delta$ is strictly
                  positive on $c_\delta^{-1} (A)$ due to the strict positivity
                  of $h$ and (\ref{gdd})--(\ref{gds}). Consequently,
\begin{gather} 
\label{gir}        
                  P (\delta,A)=H^\star_\delta (c_\delta^{-1} (A) )>0,
\end{gather}       
                  thus also $P ( (\exists k\in\mathbb{N})\; D_{t+k})\in
                  A|D_t=\delta )>0$, which
                  proves the $\mu_+$-irreducibility of $(D_t)_{t\in\mathbb{N}}$.

                  The aperiodicity of the chain will be proved by
                  contradiction. Let $\delta \in\mathbb{R}_+,\mathcal{D} \subset
                  \mathcal{B}(\mathbb{R}_+)$ be a cycle of
                  $(D_t)_{t\in\mathbb{N}}$, and $\Delta\in\mathcal{D}$ be such
                  that $\delta \in\Delta$. Then $\Delta$ is absorbing, and
                  therefore full with respect to  $\mu_+$ (\cite{meyn93markov},
                  Propositions~5.4.6 and~4.2.3). Suppose that $(D_t)_{t\in\mathbb{N}}$
                  is not aperiodic, thus there exists
                  $\Delta'\in\mathcal{D},\Delta'\ne\Delta$. The fact that
                  $\Delta$ is full with respect to  $\mu_+$ implies $\mu_+
                  (\Delta')=0$, which contradicts (\ref{gir}) with the choice
                  $A=\Delta'$, thus proving the aperiodicity.
\item             First, we will prove that the chain $(D_t)_{t\in\mathbb{N}}$
                  is weak Feller. Let $f:\mathbb{R}_+\to\mathbb{R}$ be
                  bounded continuous, thus $(\exists M_f>0) (\forall
                  x\in\mathbb{R}_+)|f (x)|<M_f$. The probability kernel
                  (\ref{gpk}) of $(D_t)_{t\in\mathbb{N}}$ induces a
                  transformation $\mathcal{T}_{(D_t)_{t\in\mathbb{N}}}$ of $f$
                  into the function
                  $\mathcal{T}_{(D_t)_{t\in\mathbb{N}}}f:\mathbb{R}_+\to\mathbb{R}$
                  defined  
\begin{gather} 
\label{gtf}        
                  (\forall \delta>0)\;\mathcal{T}_{(D_t)_{t\in\mathbb{N}}}f
                  (\delta )=\text{E} (fD_{t+1}|D_t=\delta ). 
\end{gather}      
                  Let now $\delta>0$. Due to (\ref{gdd}), (\ref{gds}),
                  (\ref{gl0}) and the bound on $f$, 
\begin{gather} 
\label{gfh}        
                  |f(c_\delta (x)) h^\star_\delta (x)|\le h (x) \frac{M_f}{H (L_0)},
\end{gather}      
                  which together with (\ref{gpk}) and (\ref{gps}) implies
\begin{multline} 
\label{gtb}        
                  |\text{E} (fD_{t+1}|D_t=\delta )|=|\int_{\mathbb{R}_+^2}f(c_\delta (x))
                  h^\star_\delta (x) dx |\le
\\
                  \le\int_{\mathbb{R}_+^2}|f(c_\delta (x))
                  h^\star_\delta (x)|dx\le \frac{M_f}{H (L_0) }.
\end{multline}      
                  From (\ref{gdd}), (\ref{gds}) and(\ref{gcd}), and from the
                  continuity of $f$ follows 
\begin{gather}
                  (\forall x\in\mathbb{R}_+^2)\; \lim_{ \delta'\to \delta
                  }f(c_{\delta'} (x)) h^\star_{\delta'} (x)=f(c_\delta (x))
                  h^\star_\delta (x),
\end{gather}      
                  which combined with (\ref{gfh}) allows to apply the dominant
                  convergence theorem, yielding 
\begin{gather}
                  \lim_{ \delta'\to \delta }\int_{\mathbb{R}_+^2}f(c_\delta (x))
                  h^\star_\delta (x)dx=\int_{\mathbb{R}_+^2}f(c_{\delta'} (x))
                  h^\star_\delta (x)dx.
\end{gather}      
                  Taking into account (\ref{gps}) implies
                  continuity of $\mathcal{T}_{(D_t)_{t\in\mathbb{N}}}f$ in
                  $\delta $, which in combination with (\ref{gtb}) means that
                  $\mathcal{T}_{(D_t)_{t\in\mathbb{N}}}$ transforms bounded
                  continuous functions into bounded continuous functions,
                  proving that $(D_t)_{t\in\mathbb{N}}$ is weak Feller.

                  Further, a consequence of the weak Feller property of
                  $(D_t)_{t\in\mathbb{N}}$, of its $\mu_+$-irreducibility, and
                  of the fact that the support of $\mu_+$ has non-empty
                  interior, is that all compacts in $\mathbb{R}_+$ are petite
                  sets (\cite{meyn93markov}, Proposition~6.2.8). In particular,
                  each set $(0,\beta]$ with $\beta>0$ is petite. 

                  Until now, we made no use of the additional conditions
                  (\ref{gmh}) and (\ref{gal}). First, from (\ref{gdd}),
                  (\ref{gds}), (\ref{gmh}), (\ref{gl0}) and the dominated
                  convergence theorem follows for $\alpha \in (0,\varepsilon )$,
\begin{multline} 
\label{gms}        
                  (\forall k\in\mathbb{N}) \;\int_{\mathbb{R}^2} |\alpha n^\top
                  x|^kh^\star_\delta (x)dx<+\infty\;\&\;\int_{\mathbb{R}^2}
                  \exp(|\alpha n^\top x|)h^\star_\delta
                  (x)dx=
\\
                  =\sum^{+\infty}_{k\in\mathbb{N}}\int_{\mathbb{R}^2} |\alpha
                  n^\top x|^kh^\star_\delta (x)dx<+\infty,  
\end{multline}
\begin{multline} 
                  (\forall k\in\mathbb{N}) \;\int_{\mathbb{R}^2} (\alpha n^\top x
                  )^kh (x)dx<+\infty \;\&\;\int_{\mathbb{R}^2} \exp(\alpha n^\top x
                  )h (x)dx=
\\
                  =\sum^{+\infty}_{k\in\mathbb{N}}\int_{\mathbb{R}^2}
                  (\alpha n^\top x)^kh (x)dx<+\infty, 
\end{multline}
\begin{multline} 
\label{gme}        
                  (\forall k\in\mathbb{N}) \;\int_{\mathbb{R}^2} (\alpha n^\top
                  x)^kh^\star_\delta (x)dx<+\infty\;\&\;\int_{\mathbb{R}^2}
                  \exp(\alpha n^\top x)h^\star_\delta
                  (x)dx=
\\
                  =\sum^{+\infty}_{k\in\mathbb{N}}\int_{\mathbb{R}^2} (\alpha
                  n^\top x)^kh^\star_\delta (x)dx<+\infty,
\end{multline}      
                  whereas due to (\ref{gal}), it is possible to introduce 
\begin{gather} 
\label{gdi}        
                  \delta_\infty=\int_{\mathbb{R}^2}n^\top
                  xH((-\infty,[x]_1)\times \mathbb{R} )^{ \lambda-1}h(x) dx>0. 
\end{gather}      
                  From (\ref{gdd}), (\ref{gds}) and (\ref{gl0}) for
                  $x\in\mathbb{R}^2$ follows
\begin{gather}
                  \lim_{ \delta \to+\infty }n^\top xh^\star_\delta (x)=n^\top
                  x(x)H((-\infty,[x]_1)\times \mathbb{R} )^{ \lambda-1}h(x),  
\\
                  \text{and } |n^\top xh^\star_\delta (x)|\le \frac{|n^\top x|h(x)}{H (L_0) },
\end{gather}      
                  which combined with the finite 1st moment of $H$ allows to apply the
                  dominated convergence theorem, yielding   
\begin{gather}
                  \lim_{ \delta \to+\infty }\int_{\mathbb{R}^2}n^\top
                  xh^\star_\delta (x)=\delta_\infty.  
\end{gather}      
                  This together with (\ref{gdi}) entails the existence of
                  $\beta>0$ such that  
\begin{gather} 
\label{gbe}        
                  (\forall \delta>\beta)\;\int_{\mathbb{R}^2}n^\top xh^\star_\delta (x)\in
                  (\frac{2}{3}\delta_\infty,\frac{4}{3}\delta_\infty ). 
\end{gather}      
                  For $\alpha \in\mathbb{R}_+$, define the function $\Delta
                  V_\alpha $ on $\mathbb{R}_+$ by  
\begin{gather}
                  (\forall \delta>0)\;\Delta V_\alpha (\delta )=\text{E}
                  (V_\alpha (D_{t+1})|D_t=\delta)-V_\alpha (\delta ).  
\end{gather}      
                  Then for $\delta>0$, (\ref{gyt}), (\ref{gxt}), (\ref{gdt}) and (\ref{gva})
                  lead to 
\begin{multline}
                  \Delta V_\alpha (\delta )=\text{E} \exp (\alpha
                  (\delta-n^\top M_t^{i_t,j_{t,i_t} (\delta)}))-V_\alpha (\delta )=
\\
                  =\text{E}(V_\alpha (\delta )\exp (-\alpha n^\top M_t^{i_t,j_{t,i_t}
                  (\delta)})-V_\alpha (\delta )=
\\
                  =V_\alpha (\delta )\text{E}\exp (-\alpha n^\top M_t^{i_t,j_{t,i_t}
                  (\delta)})-V_\alpha (\delta )=
\\
                  =V_\alpha (\delta )\int_{\mathbb{R}^2}\exp(-\alpha n^\top
                  x)h^\star_\delta (x)dx-V_\alpha (\delta ).
\end{multline}   
                  In particular for $0<\delta \le\beta$ from (\ref{gdd}),
                  (\ref{gds}), (\ref{gmh}) and (\ref{gl0}) follows for $\alpha\in
                  (0,\varepsilon )$
\begin{gather} 
\label{glb}        
                  \Delta V_\alpha (\delta )\le\frac{V_\alpha (\delta )}{H (L_0)
                  }\int_{\mathbb{R}^2}\exp(|\alpha n^\top x|)h (x)dx-V_\alpha
                  (\delta ). 
\end{gather}    
                  On the other hand, for $\delta>\beta$ and $\alpha\in
                  (0,\varepsilon )$, (\ref{gdd}), (\ref{gds}), (\ref{gmh}),
                  (\ref{gme}) and (\ref{gbe}) lead to
\begin{multline} 
\label{ggb}        
                  \Delta V_\alpha (\delta )=V_\alpha (\delta
                  )\int_{\mathbb{R}^2} (1-\alpha n^\top x+\sum_{k=2}^\infty
                  \frac{(-\alpha n^\top x)^k }{k!} )h^\star_\delta (x)dx-V_\alpha (\delta )=
\\
                  =\alpha V_\alpha (\delta)(-\int_{\mathbb{R}^2} n^\top xh^\star_\delta
                  (x)dx+\alpha\sum_{k=2}^\infty \frac{
                  \alpha^{k-2}\int_{\mathbb{R}^2}(- n^\top x)^kh^\star_\delta
                  (x)dx}{k!}  \le
\\
                  \le\alpha V_\alpha
                  (\delta)\left (-\frac{2}{3}\delta_\infty+\alpha\sum_{k=2}^\infty
                  \frac{ \alpha^{k-2}\int_{\mathbb{R}^2}|n^\top x|^kh(x)dx}{k!} \right). 
\end{multline}    
                  For $0\le\alpha<\alpha'<\varepsilon $ is 
\begin{gather}
                  0\le\alpha\sum_{k=2}^\infty \frac{
                  \alpha^{k-2}\int_{\mathbb{R}^2}|n^\top
                  x|^kh(x)dx}{k!}\le\alpha\sum_{k=2}^\infty \frac{
                  \alpha'^{k-2}\int_{\mathbb{R}^2}|n^\top x|^kh(x)dx}{k!},
\end{gather}      
                  which together with $\lim_{ \alpha \to
                  0}\alpha\sum_{k=2}^\infty \frac{
                  \alpha'^{k-2}\int_{\mathbb{R}^2}|n^\top x|^kh(x)dx}{k!}=0$
                  entails
\begin{gather}
                  \lim_{ \alpha \to
                  0}\alpha\sum_{k=2}^\infty \frac{
                  \alpha^{k-2}\int_{\mathbb{R}^2}|n^\top x|^kh(x)dx}{k!}=0.  
\end{gather}      
                  Consequently,
\begin{gather}
                  (\exists \alpha_0>0) (\forall \alpha \in (0,\alpha_0) )\;\lim_{\alpha \to
                  0}\alpha\sum_{k=2}^\infty \frac{
                  \alpha^{k-2}\int_{\mathbb{R}^2}|n^\top
                  x|^kh(x)dx}{k!}<\frac{1}{3}\delta_\infty . 
\end{gather}   
                  Hence, putting $ \alpha \in (0,\min\{\alpha_0,\varepsilon \})$
                  into (\ref{ggb}) yields
\begin{gather} 
\label{g13}        
                  \Delta V_\alpha (\delta )\le-\frac{1}{3}\alpha V_\alpha (delta).
\end{gather}      
                  Let $\alpha_n=\min\{\alpha_0,\varepsilon,3 \}$. Then combining
                  (\ref{glb}) and (\ref{g13}) leads for $\alpha \in
                  (0,\alpha_n)$ to  
\begin{gather}
                  (\forall \delta>0)\;\Delta V_\alpha (\delta
                  )\le-\frac{1}{3}\alpha V_\alpha (delta)+\frac{V_\alpha (\delta )}{H (L_0)
                  }\int_{\mathbb{R}^2}\exp(|\alpha n^\top x|)h (x)dx \mathbb{I}_{(0,\beta]} (x),
\end{gather}      
                  which according to the Geometric Ergodic Theorem
                  (\cite{meyn93markov}, Theorem 15.0.1) proves
                  $(D_t)_{t\in\mathbb{N}}$ to be $V_\alpha-$geometrically
                  ergodic. \qed
\end{enumerate}   
\end{proof}
\begin{example}
                  In \cite{chotard14markov}, movements with Gaussian
                  distributions with zero mean are considered. If $H$ is a
                  non-degenerated 2-dimensional Gaussian distribution with
                  $\int_xh (x)dx=0$, then all assumptions of
                  Proposition~\ref{g2} are fulfilled, including the additional
                  conditions (\ref{gmh}) and (\ref{gal}):
\begin{romanitems}
\item             $h$ is continuous strictly positive.
\item             $n^\top M_t^{i,j}$ has a 1-dimensional Gaussian distribution,
                  thus all its moments are finite and the moment-generating
                  function $\text{E} \exp(\alpha n^\top M_t^{i,j}) $ is finite
                  for all $\alpha \in\mathbb{R}$. From this, the finiteness of
                  $\text{E} \exp(|\alpha n^\top M_t^{i,j}|)$ follows due to
                  the symmetry of the Gaussian density, the finiteness of $k$-th
                  moments of $|n^\top M_t^{i,j}|$ is for $k$ even already
                  equivalent to the finiteness of $k$-th moments of
                  $n^\top M_t^{i,j}$, and for $k$ odd is a consequence of the
                  inequality 
\begin{gather}
                  |n^\top M_t^{i,j}|^k\le 
\begin{cases}
                  1&\text{ if } |n^\top M_t^{i,j}|\le 1,
\\
                  |n^\top M_t^{i,j}|^{k+1}&\text{ else }.
\end{cases}       
\end{gather}      
\item             Denote $H_1,H_2$ the marginal distributions of $H$, and
                  $h_1,h_2$ their respective marginal densities. Then the
                  symmetry of the Gaussian denisty implies
\begin{multline}
                  \int_{\mathbb{R}^2}n^\top xH((-\infty,[x]_1)\times \mathbb{R}
                  )^{ \lambda-1}h(x) dx=
\\
                  =\int_{\mathbb{R}}\int_{\mathbb{R}}h(x)d[x]_2[n]_1[x]_1H_1((-\infty,[x]_1)
                  )^{ \lambda-1}d[x]_1+
\\
                  +\int_{\mathbb{R}}\int_{\mathbb{R}}H_1((-\infty,[x]_1) )^{
                  \lambda-1}h(x)dx[x]_1[n]_2[x]_2d[x]_2
\\
                  =[n]_1\int_{\mathbb{R}}h_1([x]_1)[x]_1H_1((-\infty,[x]_1))d[x]_1+
\\
                  [n]_2\!\int_{\mathbb{R}_+}\!
                  (\!\int_{\mathbb{R}}\!H_1((-\infty,[x]_1))^{ \lambda-1}
                  -\!\int_{\mathbb{R}}\!H_1((-\infty,-[x]_1))^{
                  \lambda-1})h(x)d[x]_1)[x]_2d[x]_2
\\
                  =[n]_1\int_{\mathbb{R}_+}h_1([x]_1)[x]_1H_1((-[x]_1,[x]_1))d[x]_1+
\\
                  +[n]_2\int_{\mathbb{R}_+}\int_{
                  ((-[x]_1,[x]_1))}H_1((-\infty,[x]_1))^{ \lambda-1}h(x)d[x]_1[x]_2d[x]_2>0.
\end{multline}        
\end{romanitems}   
\end{example}

\section{         Investigation of Movement Distributions by Means of Copulas}

                  To get a deeper understanding of the influence that the
                  distribution of the movements has at the resulting Markov
                  chain, it is advantageous to decompose that distribution into
                  its marginals and the copula combining them. In this section,
                  we express the density $h$ of the movements distribution
                  using such a decomposition. We will pay a separate attention
                  to the particularly well transparent structure of Archimedean
                  copulas. The results, formulated below in
                  Proposition~\ref{c3}, are essentially an application of two
                  important theorems of the copula theory (first of them being
                  the famous Sklar's theorem \cite{sklar59fonctions} that
                  established the relationship between multivariate
                  distributions and copulas).
\begin{theorem}
                  {\bfseries Sklar} \cite{sklar59fonctions} Let $m\in
                  \mathcal{N}$ and $F_1\dots F_m$ be distribution functions of
                  one-dimensional random variables. For $i=1\dots m$, let
                  $\val F_i$ denote the value set of $F_i$ and $F^-_i$ the
                  pseudoinverse of $F_i$, defined
\begin{gather}
                  (\forall y\in[0,1])\;F^-_i (y)=\inf\{x\in\mathbb{R}:F_i (x)\ge y \}.
\end{gather}      
                  Then there exists an $m$-dimensional distribution function\,$F$
                  such\,that $F_1\dots F_m$ are marginals of $F$ if and only if
                  there exists an $m$-dimensional copula $C$, i.e., a
                  distribution function on $[0,1]^m$ with uniform marginals,
                  fulfilling  
\begin{gather} 
\label{csk}        
                  (\forall x\in\mathbb{R}^m)\;F (x)=C (F_1 ([x]_1)\dots F_m ([x]_m) ).  
\end{gather}      
                  In the positive case, $C$ is uniquely determined on the set
                  $\val F_1\times\dots\times\val  F_m$, and is given by 
\begin{gather}
                  (\forall  u\in\val  F_1\times\dots\times\val  F_m)\;
                  C (u)=F (F^-_1 ([u]_1)\dots F^-_m ([u]_m) ).
\end{gather}     
\end{theorem}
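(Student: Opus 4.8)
The plan is to establish the two implications of the equivalence separately and then derive the closed-form expression that yields uniqueness on $\val F_1\times\dots\times\val F_m$. The \emph{if} direction (building $F$ from $C$) is a routine verification: given an $m$-dimensional copula $C$, set $F(x)=C(F_1([x]_1)\dots F_m([x]_m))$ and check that $F$ is a genuine $m$-dimensional distribution function with the prescribed marginals. Right-continuity and the $m$-increasing property descend from those of $C$ together with the monotonicity and right-continuity of each $F_i$, the limits $F(x)\to 1$ (all $[x]_i\to+\infty$) and $F(x)\to 0$ (some $[x]_i\to-\infty$) follow from $F_i([x]_i)\to 1$ respectively $0$ and the boundary behaviour of $C$, and the marginal identity
\begin{gather*}
\lim_{[x]_k\to+\infty,\,k\ne i}F(x)=C(1,\dots,F_i([x]_i),\dots,1)=F_i([x]_i)
\end{gather*}
uses precisely that $C$ has uniform marginals.

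The substance lies in the \emph{only if} direction (building $C$ from $F$). I would realise $F$ by a random vector $X$ with $X\sim F$, adjoin a variable $V$ uniform on $[0,1]$ and independent of $X$, and form for each $i$ the randomised distributional transform
\begin{gather*}
U_i=F_i([X]_i-)+V\bigl(F_i([X]_i)-F_i([X]_i-)\bigr).
\end{gather*}
The key lemma is that this transform renders each $U_i$ uniform on $[0,1]$ with $F^-_i(U_i)=[X]_i$ almost surely; the randomisation by $V$ is exactly what interpolates across the jumps of $F_i$ and so fills the gaps that discontinuities leave in $\val F_i$. I then define $C$ to be the joint distribution function of $(U_1,\dots,U_m)$, which is a copula because each margin $U_i$ is uniform.

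To recover (\ref{csk}) I would invoke the Galois-type equivalence $F^-_i(y)\le t\Leftrightarrow y\le F_i(t)$, valid for the pseudoinverse of any right-continuous nondecreasing function; applied pointwise in $\omega$ it gives the event identity $\{F^-_i(U_i)\le[x]_i\}=\{U_i\le F_i([x]_i)\}$, whence
\begin{multline*}
F(x)=P\bigl([X]_1\le[x]_1,\dots,[X]_m\le[x]_m\bigr)=\\
P\bigl(U_1\le F_1([x]_1),\dots,U_m\le F_m([x]_m)\bigr)=C\bigl(F_1([x]_1)\dots F_m([x]_m)\bigr),
\end{multline*}
using $[X]_i=F^-_i(U_i)$ a.s. For the explicit formula and uniqueness, observe that for $[u]_i\in\val F_i$ one has $F_i(F^-_i([u]_i))=[u]_i$; substituting $[x]_i=F^-_i([u]_i)$ into (\ref{csk}) forces $C(u)=F(F^-_1([u]_1)\dots F^-_m([u]_m))$ on $\val F_1\times\dots\times\val F_m$, so any two copulas obeying (\ref{csk}) coincide there.

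The main obstacle is handling marginals that are neither continuous nor strictly increasing. If every $F_i$ were continuous one could simply define $C$ on the product of ranges by $C(u)=F(F^-_1([u]_1)\dots F^-_m([u]_m))$ and extend by continuity; but when some $F_i$ jumps, its range omits an interval and $F^-_i$ is discontinuous, so this naive recipe need not be $m$-increasing nor have uniform marginals on all of $[0,1]^m$, and a delicate bilinear (checkerboard) extension would be required. The distributional transform sidesteps any separate extension step entirely: the auxiliary randomisation $V$ produces genuinely uniform $U_i$ irrespective of the shape of $F_i$, delivering the copula in one stroke, so that the only real work is the elementary verification of the transform lemma and the Galois identity.
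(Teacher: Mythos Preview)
The paper does not prove this statement: Sklar's theorem is quoted from \cite{sklar59fonctions} as a classical result, with no accompanying argument, and is used only as background for Proposition~\ref{c3}. There is therefore nothing in the paper to compare your proposal against.

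That said, your outline is correct and is essentially R\"uschendorf's distributional-transform proof. The \emph{if} direction is routine as you say. For the \emph{only if} direction, the randomised transform $U_i=F_i([X]_i-)+V\bigl(F_i([X]_i)-F_i([X]_i-)\bigr)$ with a single auxiliary $V$ independent of $X$ does produce uniform marginals and satisfies $F^-_i(U_i)=[X]_i$ a.s., and the Galois identity $F^-_i(y)\le t\Leftrightarrow y\le F_i(t)$ then yields (\ref{csk}) exactly as you write. Your uniqueness argument is also sound: for $[u]_i$ in the range of $F_i$ right-continuity and monotonicity give $F_i(F^-_i([u]_i))=[u]_i$, so substituting $[x]_i=F^-_i([u]_i)$ into (\ref{csk}) pins $C$ down on $\val F_1\times\dots\times\val F_m$.

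This route is more streamlined than Sklar's original 1959 argument, which in the discontinuous case proceeds by first defining $C$ on $\val F_1\times\dots\times\val F_m$ via the pseudoinverse formula and then extending across the gaps of each $\val F_i$ by an explicit multilinear (checkerboard) interpolation, checking $m$-increasingness by hand. Your use of the distributional transform manufactures a genuine random vector with uniform marginals in one step, so the extension is automatic; the trade-off is that the resulting copula is only implicitly described off $\val F_1\times\dots\times\val F_m$, whereas the classical construction gives an explicit formula everywhere.
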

\begin{theorem} 
\label{cm}        
                  \cite{mcneil09multivariate} Let $m\in
                  \mathcal{N},\psi:[0,+\infty]\to[0,1]$ be an Archimedean
                  generaor, i.e., $\psi (0)=1,\psi (+\infty )=\lim_{t\to +\infty
                  }\psi (t)=0, \psi$ is continuous and strictly decreasing on
                  $[0,\inf\{t:\psi (t)=0 \} )$, and let
                  $\psi^{-1}:[0,1]\to[0,+\infty]$ be defined
\begin{gather}
                  (\forall u\in [0,1])\;\psi^{-1} (u)=\inf\{t:\psi (t)=u \}
\end{gather}      
                  and $C_\psi:[0,1]^m\to[0,1]$ be defined
\begin{gather} 
\label{car}        
                  (\forall u\in[0,1]^m)\;C_\psi (u) =\psi
                  (\psi^{-1} ([u]_1)+\dots+\psi^{-1} ([u]_m)).  
\end{gather}      
                  Then 
\begin{enumerate}
\item             $C_\psi$ is a copula if and only if $\psi$ is $m$-monotone,
                  i.e., $\psi$ is continuous on $[0,+\infty],\psi^{ (k)
                  }$ exists on $\mathbb{R}_+$ for $k=1\dots m-2$, $\psi^{ (m-2) }$ is
                  decreasing and convex on $\mathbb{R}_+$, and $(\forall
                  k\in\{0\dots m-2\}) (\forall t\in\mathbb{R}_+)\; (-1)^k\psi^{ (k) }
                  (t)\ge 0$.
\item             In particular, a sufficient condition for $C_\psi$ to be a
                  copula is $\psi$ being completely monotone, i.e., $m$-monotone
                  for each $m\in \mathcal{N},m\ge 2$. 
\end{enumerate}   
\end{theorem}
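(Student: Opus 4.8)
The plan is to reduce the statement to the Williamson transform characterization of $m$-monotone functions, combined with a stochastic representation on the unit simplex for the sufficiency direction. Part~(2) is then immediate: complete monotonicity is by definition $m$-monotonicity for every $m\ge 2$, so part~(1) applies verbatim for each fixed $m$, and it suffices to establish the equivalence in part~(1).

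For the \emph{necessity} in part~(1), I would use that a copula is in particular an $m$-dimensional distribution function on $[0,1]^m$, hence $m$-increasing: the $C_\psi$-mass of every rectangle $\prod_{i=1}^m([a]_i,[b]_i)\subseteq[0,1]^m$ is non-negative. Substituting $t_i=\psi^{-1}([u]_i)$ turns (\ref{car}) into $C_\psi(u)=\psi(t_1+\dots+t_m)$, so, since $\psi^{-1}$ is monotone (decreasing), the rectangle inequalities become sign conditions on the $m$-th order finite differences of the single-variable map $t\mapsto\psi(t)$ taken along the coordinate increments. Shrinking the increments and invoking the continuity and strict monotonicity of $\psi$, I would extract that $\psi^{(k)}$ exists on $\mathbb{R}_+$ with $(-1)^k\psi^{(k)}\ge 0$ for $k=0\dots m-2$, while the non-negativity of the top-order difference forces $\psi^{(m-2)}$ to be decreasing and convex. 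This is exactly $m$-monotonicity.

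For the \emph{sufficiency}, I would invoke the Williamson $m$-transform: any $m$-monotone $\psi$ with $\psi(0)=1$ and $\psi(+\infty)=0$ admits a representation
\[
\psi(x)=\int_{[x,+\infty)}\left(1-\frac{x}{r}\right)^{m-1}dF(r),
\]
for a unique distribution function $F$ of a positive random variable $R$, determined by $\psi$ through the inversion of the transform. I would then form $X=R\,S$, where $S$ is uniform on the unit simplex $\{s\in\mathbb{R}_+^m:\sum_i[s]_i=1\}$ and $R\sim F$ is independent of $S$. A direct computation shows that the multivariate survival function of $X$ equals $\psi([x]_1+\dots+[x]_m)$ on the positive orthant and that each univariate marginal survival function equals $\psi$. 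Setting $[u]_i=\psi([x]_i)$ then identifies the survival copula of $X$ with $C_\psi$; since every survival copula is a genuine copula, $C_\psi$ is a copula.

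The main obstacle is the sufficiency direction, and within it the two technical cores: establishing the Williamson representation (its existence and the precise kernel $(1-x/r)^{m-1}$, obtained by inverting the transform via the derivatives of $\psi$), and verifying that the simplex mixture $X=R\,S$ reproduces exactly the survival function $\psi([x]_1+\dots+[x]_m)$. The latter is a radial-symmetry computation in which the $\ell_1$-distribution of $S$ contributes precisely the factor $(1-x/r)^{m-1}$, matching the Williamson kernel; this matching is what closes the equivalence.
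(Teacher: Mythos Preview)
The paper does not prove this theorem at all: it is quoted verbatim as a result from \cite{mcneil09multivariate} and used as a black box in the subsequent Proposition~\ref{c3}. There is therefore no ``paper's own proof'' to compare against.

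That said, your sketch is precisely the argument of the cited source (McNeil--Ne\v{s}lehov\'a): necessity via the $m$-increasing property of $C_\psi$ translated through $t_i=\psi^{-1}([u]_i)$ into sign conditions on higher differences of $\psi$, and sufficiency via the Williamson $m$-transform together with the $\ell_1$-radial mixture $X=R\,S$ with $S$ uniform on the unit simplex. The identification of the survival copula of $X$ with $C_\psi$ is exactly how that reference closes the loop. One point to tighten in the necessity direction: passing from non-negativity of $m$-th order finite differences to the \emph{existence} of $\psi^{(k)}$ for $k\le m-2$ is not obtained merely by ``shrinking increments''; the cleaner route (and the one taken in the reference) is to first deduce convexity/monotonicity of successive differences, which yields one-sided derivatives everywhere, and then use the higher-order convexity to upgrade to genuine derivatives up to order $m-2$. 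With that caveat, your outline is correct and matches the standard proof you would find in the cited paper.
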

\begin{proposition} 
\label{c3}        
                  Let $H$ be the distribution from Proposition~\ref{g2}, $H_1$ and
                  $H_2$ be its marginals, and $C$ be a copula relating $H$ to $H_1$ and
                  $H_2$ according to the Sklar's theorem. Then the following holds:
\begin{enumerate}
\item             Sufficient for $H$ to have a continuous strictly
                  positive density is the simultaneous validity of the following
                  three conditions.
\begin{romanitems}
\item             $H_1$ and $H_2$ have continuous strictly positive densities
                  $h_1$ and $h_2$, respectively.
\item             $C$ has a continuous strictly positive density $c$.
\item             for $x\in\mathbb{R}^2$, 
\begin{gather} 
\label{chg}        
                  h (x)=c (H_1 ([x]_1),H_2 ([x]_2) )h_1([x]_1)h_2 ([x]_2).
\end{gather}      
\end{romanitems}   
\item             If $C$ is Archimedean, i.e., $C=C_\psi$ according to
                  (\ref{car}) with some Archimedean generator $\psi$, then it is
                  sufficient to replace (ii) and (iii), respectively, with 
\begin{enumerate}
\item[(ii')]      $\psi$ is at least 4-monotone.
\item[(iii')]     for $x\in\mathbb{R}^2$,
\begin{gather} 
\label{cha}        
                  h (x)=\frac{\psi'' (\psi^{-1} (H_1 ([x]_1))+\psi^{-1} (H_2
                  ([x]_2)) ) }{\psi' (\psi^{-1} (H_1 ([x]_1))+\psi^{-1} (H_2
                  ([x]_2)) )}h_1([x]_1)h_2 ([x]_2). 
\end{gather}      
\end{enumerate}   
\end{enumerate}   
\end{proposition}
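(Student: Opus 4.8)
The plan is to obtain both parts from Sklar's theorem, Equation~(\ref{csk}), by writing the joint distribution as $H(x)=C(H_1([x]_1),H_2([x]_2))$ and differentiating this representation twice to read off the density; the Archimedean case then follows by inserting the explicit form (\ref{car}) and using the inverse-function relation for $\psi^{-1}$.

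For part~1, I would first apply Sklar's theorem with $m=2$ and $F=H$ to get $H(x)=C(H_1([x]_1),H_2([x]_2))$ on all of $\mathbb{R}^2$. Under condition~(i) each marginal is absolutely continuous with a continuous strictly positive density, so $H_1,H_2$ are strictly increasing homeomorphisms of $\mathbb{R}$ onto $(0,1)$; under condition~(ii) the copula is absolutely continuous, so $C(u,v)=\int_0^u\int_0^v c(s,t)\,dt\,ds$. Substituting $u=H_1([x]_1)$, $v=H_2([x]_2)$ and changing variables via $s=H_1(s')$, $t=H_2(t')$ rewrites $H$ as an iterated integral of $c(H_1(s'),H_2(t'))\,h_1(s')\,h_2(t')$, which shows that $H$ is absolutely continuous with the density claimed in (\ref{chg}). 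Continuity of this density is immediate from the continuity of $c,H_1,H_2,h_1,h_2$, and strict positivity follows because the arguments $(H_1([x]_1),H_2([x]_2))$ lie in $(0,1)^2$, where $c>0$, while $h_1,h_2>0$.

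For part~2, I would specialize $C=C_\psi$ as in (\ref{car}), so that $C(u,v)=\psi(\psi^{-1}(u)+\psi^{-1}(v))$. Part~1 of Theorem~\ref{cm} already gives that $2$-monotonicity makes $C_\psi$ a copula; the stronger hypothesis (ii'), namely $4$-monotonicity, is exactly what secures the regularity needed for a continuous density: it forces $\psi'$ and $\psi''$ to exist on $\mathbb{R}_+$ with $\psi''$ decreasing and convex, hence continuous, while the generator being strictly decreasing gives $\psi'<0$ on its support. Differentiating $C_\psi$ as in part~1 and using $(\psi^{-1})'(u)=1/\psi'(\psi^{-1}(u))$ produces the Archimedean copula density in closed form, and substituting it into (\ref{chg}) yields (\ref{cha}). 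Strict positivity here reduces to $\psi''>0$, which I would deduce from strict convexity of the generator: a nonnegative, decreasing $\psi''$ that vanished at a point would force $\psi$ to be affine on a half-line, incompatible with $\psi(+\infty)=0$ and $\psi>0$.

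The step I expect to be the main obstacle is the rigorous justification, in part~1, that the formal mixed partial derivative is genuinely the density of $H$ rather than merely a candidate: this is where the absolute continuity of $C$, $H_1$, $H_2$ must be combined through Fubini and the change-of-variables theorem, instead of simply differentiating the distribution function pointwise. In part~2 the counterpart difficulty is translating the monotonicity levels of Theorem~\ref{cm} into the pointwise existence and continuity of $\psi''$ needed to legitimize the differentiation, together with the strict inequalities underlying positivity of the resulting density.
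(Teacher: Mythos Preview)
Your proposal is correct and follows essentially the same route as the paper: both parts are obtained from Sklar's representation $H(x)=C(H_1([x]_1),H_2([x]_2))$ by computing the mixed partial derivative via the chain rule, and the Archimedean case is handled by inserting (\ref{car}) together with the inverse-function derivative $(\psi^{-1})'(u)=1/\psi'(\psi^{-1}(u))$. The paper's argument is terser---it simply invokes the identities $h=\partial^2 H/\partial[x]_1\partial[x]_2$, $c=\partial^2 C/\partial[u]_1\partial[u]_2$ and the chain rule---whereas you carry out the same computation via an iterated integral and a change of variables, and you spell out the continuity and strict-positivity verifications (which the paper leaves implicit). Your identification of the ``main obstacle'' (passing from a formal mixed partial to a bona fide density) is precisely the point the paper glosses over, so your more careful treatment is an improvement rather than a departure.
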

\begin{proof}
\item             To prove (\ref{chg}), the relationships 
\begin{gather} 
\label{cpd}        
                  (\forall x\in \mathbb{R}^2)\;h (x)=\frac{\partial^2
                  H}{\partial [x]_1 \partial [x]_2 } (x)
                  ,h_1([x]_1)=\frac{H_1 }{d[x]_1} ([x]_1),h_2([x]_2)=\frac{H_2
                  }{d[x]_2} ([x]_2), 
\end{gather}      
                  are combined with (\ref{csk}) and $c (u)=\frac{\partial^2
                  C}{\partial [u]_1 \partial [u]_2 } (u)$, making use of
                  the assumption that $h_1,h_2$ and $c$ are continuous.
\item             To prove (\ref{cha}), the relationships (\ref{cpd}) are
                  combined with (\ref{car}) and (\ref{chg}), making use of
                  the assumption that $h_1,h_2$ and $c$ are continuous, and of
                  the fact that if $y=\psi (t)$, where $t=\psi^{-1}
                  (u_1)+\psi^{-1} (u_2)$, then 
\begin{gather}
                  (\psi^{-1})' (y)=\frac{1}{\psi (t) } =\frac{1}{\psi^{-1}
                  (u_1)+\psi^{-1} (u_2)}.
\end{gather}      
                  \qed
\end{proof}
\begin{example}
                  A simple Archimedean generator is given by 
\begin{gather} 
\label{cgg}        
                  (\forall t\in \mathbb{R}_+)\; \psi (t)=\exp (-t^{ \frac{1}{
                  \vartheta } }) \text{, or equivalently, } (\forall y\in (0,1)
                  )\; \psi^{-1} (y)= (-\ln y)^\vartheta,
\end{gather}      
                  with $\vartheta \in[1,+\infty)$. From (\ref{cgg}) follows that 
\begin{gather}
                  (\forall t\in \mathbb{R}_+)\; \psi' (t)=-\frac{\exp (-t^{ \frac{1}{
                  \vartheta } })}{ \vartheta t^\frac{ \vartheta-1}{ \vartheta } }<0,
\end{gather}      
                  and that the sign of $\psi^{ (k) }$ is opposite to the sign of
                  $\psi^{ (k-1) }$ for $k\ge 2$. Thus  $\psi $ is completely
                  monotone and using (\ref{car}),  $\psi$ indeed defines a copula,
\begin{gather} 
\label{gu}        
                  C_\psi (u)=\exp \left (-\left ( (-\ln
                  [u]_1)^\vartheta+(-\ln [u]_2)^\vartheta \right )^{ \frac{1}{
                  \vartheta }}\right  ).                   
\end{gather}      
                  Such copulas are called \emph{Gumbel copulas}
                  \cite{nelsen06introduction}. The particular choice
                  $\vartheta=1$ yields the \emph{product copula}, 
\begin{gather}
                  \Pi(u_1\dots u_m)=\exp (- ( (-\ln u_1)+(-\ln u_2) ) 
                  )=\prod_{i=1}^mu_i, 
\end{gather}      
                  which describes the independence of marginals because due to
                  (\ref{csk}), 
\begin{gather}
                  F (x_1\dots x_m)=\prod_{i=1}^mF_i (x_i).
\end{gather}      
                  To investigate movement distributions, any of (\ref{chg}) or
                  (\ref{cha}) lead to the following expression for the density $h$
                  needed in Proposition~\ref{g2}.
\begin{multline}
                  h (x)=H (x)\frac{h_1 ([x]_1)h_2 ([x]_2)}{H_1 ([x]_1)H_2 ([x]_2)
                  }\left (\frac{(-\ln H_1 ([x]_1))(-\ln H_2 ([x]_2))}{
                  \left ( (-\ln H_1 ([x]_1))^\vartheta+(-\ln H_2
                  ([x]_2))^\vartheta \right )^{ \frac{2}{ \vartheta } } } \right
                  )^{ \vartheta-1} \cdot 
\\
                  \cdot  \left (1-\frac{ \vartheta-1}{ \vartheta\left ( (-\ln
                  H_1 ([x]_1))^\vartheta+(-\ln H_2 ([x]_2))^\vartheta \right ) }
                  \right).
\end{multline}    
\end{example}


%

\bibliography{odkazy}

\end{document}